\documentclass[12pt]{article}

\usepackage{amsmath}
\usepackage{amssymb}
\usepackage{amsfonts}
\usepackage{latexsym}
\usepackage{bm}
\usepackage{color}
\usepackage{theorem}

\catcode `\@=11 \@addtoreset{equation}{section}

\catcode `\@=12





\newcommand{\be}{\begin{equation}}
\newcommand{\en}{\end{equation}}
\newcommand{\bea}{\begin{eqnarray}}
\newcommand{\ena}{\end{eqnarray}}
\newcommand{\beano}{\begin{eqnarray*}}
\newcommand{\enano}{\end{eqnarray*}}
\newcommand{\bee}{\begin{enumerate}}
\newcommand{\ene}{\end{enumerate}}

\newcommand{\mb}{\mathbb}

\newcommand{\R}{\mathbb{R}}
\newcommand{\mc}{\mathcal}

\newcommand{\D}{{\mc D}}

\newcommand{\E}{{\cal E}}
\newcommand{\F}{{\cal F}}

\newcommand{\Lc}{{\cal L}}
\newcommand{\C}{{\mathbb C}}
\newcommand{\1}{1 \!\! 1}

\newcommand{\Hil}{\mc H}

\newtheorem{thm}{Theorem}[section]
\newtheorem{cor}[thm]{Corollary}

\newtheorem{prop}[thm]{Proposition}
\newtheorem{defn}[thm]{Definition}

\newenvironment{proof}{\noindent {\bf Proof --}}{\hfill$\square$ \vspace{3mm}\endtrivlist}

\newcommand{\ip}[2]{\left\langle {#1},{#2} \right\rangle}

\catcode `\@=11 \@addtoreset{equation}{section}
\catcode `\@=12

\textwidth17cm \textheight21cm

\hoffset-1.5cm \voffset-1cm

\begin{document}

\thispagestyle{empty}

\vspace*{2cm}

\begin{center}
{{\Large \bf Generalized Riesz systems\\ and quasi bases in Hilbert space}}\\[10mm]


{\large F. Bagarello} \footnote[1]{ Dipartimento di Energia, Ingegneria dell'Informazione e Modelli Matematici,
Facolt\`a di Ingegneria, Universit\`a di Palermo, I-90128  Palermo, and INFN, Sezione di Napoli, ITALY\\
e-mail: fabio.bagarello@unipa.it\,\,\,\, Home page: www1.unipa.it/fabio.bagarello}
\vspace{3mm}\\

{\large H. Inoue} \footnote[2]{Center for advancing Pharmaceutical Education, Daiichi University of Pharmacy, 22-1 Tamagawa-cho, Minami-ku, Fukuoka 815-8511, Japan\\
e-mail: h-inoue@daiichi-cps.ac.jp}
\vspace{3mm}\\

{\large C. Trapani}\footnote[3]{Dipartimento di Matematica e Informatica, Universit\`a di Palermo, I-90123 Palermo, Italy\\e-mail: camillo.trapani@unipa.it}
\vspace{3mm}\\

\end{center}

\vspace*{2cm}

\begin{abstract}
\noindent The purpose of this article is twofold. First of all, the notion of $(\D, \E)$-quasi basis is introduced for a pair $(\D, \E)$ of dense subspaces of Hilbert spaces. This consists of two biorthogonal sequences $\{ \varphi_n \}$ and $\{ \psi_n \}$ such that $\sum_{n=0}^\infty \ip{x}{\varphi_n}\ip{\psi_n}{y}=\ip{x}{y}$ for all $x \in \D$ and $y \in \E$. Secondly, it is shown that if biorthogonal sequences $\{ \varphi_n \}$ and $\{ \psi_n \}$ form a $(\D ,\E)$-quasi basis, then they are generalized Riesz systems. The latter play an interesting role for the construction of non-self-adjoint Hamiltonians and other physically relevant operators.

\end{abstract}


\vfill


\newpage

\section{Introduction}
A sequence $\{ \varphi_n \}$ in a Hilbert space $\Hil$ is called a generalized Riesz system if there exist an orthonormal basis (from now on, ONB) $\F_e = \{ e_n \}$ in $\Hil$ and a densely defined closed operator $T$ in $\Hil$ with densely defined inverse such that $\F_e \subset D(T) \cap D((T^{-1})^\ast)$ and $Te_n= \varphi_n$, $n=0,1, \cdots$. In this case $(\F_e , T)$ is called a constructing pair for $\{ \varphi_n \}$, \cite{bit_jmp2018, atsushi,hiro2}. Then, if we put $\psi_n := (T^{-1})^\ast e_n$, $n=0,1, \cdots$, $\F_\varphi :=\{ \varphi_n \}$ and $\F_\psi := \{ \psi_n \}$ are biorthogonal sequences in $\Hil$, that is, $\ip{\varphi_n}{\psi_m}=\delta_{nm}$, $n,m=0,1, \cdots$.

The notion of generalized Riesz system is useful to investigate non-self-adjoint Hamiltonians constructed from $\F_\varphi$ and $\F_\psi$. More precisely, let $\F_\varphi$ be a generalized Riesz system with a constructing pair $(\F_e ,T)$ and define $\psi_n$ as above. Then we consider the operators

$$H_\varphi^{\bm{\alpha}}:= TH_{\bm{e}}^{\bm{\alpha}}T^{-1}, \quad A_\varphi^{\bm{\alpha}} := TA_{\bm{e}}^{\bm{\alpha}}T^{-1}\; \mbox{ and }B_\varphi^{\bm{\alpha}} := TB_{\bm{e}}^{\bm{\alpha}}T^{-1},$$
together with
$$H_\psi^{\bm{\alpha}} := (T^\ast)^{-1} H_{\bm{e}}^{\bm{\alpha}}T^\ast, \quad A_\psi^{\bm{\alpha}} := (T^\ast)^{-1} A_{\bm{e}}^{\bm{\alpha}} T^\ast\; \mbox{ and }B_\psi^{\bm{\alpha}} := (T^{-1})^\ast B_{\bm{e}}^{\bm{\alpha}} T^\ast,$$
  where $\bm{\alpha}= \{ \alpha_n \} \subset \C$. Here
$$H_{\bm{e}}^{\bm{\alpha}}:= \sum_{n=0}^\infty \alpha_n e_n \otimes \bar{e}_n, \quad A_{\bm{e}}^{\bm{\alpha}}:= \sum_{n=0}^\infty \alpha_{n+1}e_{n} \otimes \bar{e}_{n+1}, \quad B_{\bm{e}}^{\bm{\alpha}} := \sum_{n=0}^\infty \alpha_{n+1} e_{n+1} \otimes \bar{e}_n$$ are a self-adjoint Hamiltonian, the lowering operator and the raising operator for $\{ e_n \}$, respectively (if, $x,y,z\in \Hil$, $(y\otimes \bar{z})x:=\ip{x}{z}\!y$ ).

Since $H_\varphi^{\bm{\alpha}} \varphi_n =\alpha_n \varphi_n$, $A_\varphi^{\bm{\alpha}} \varphi_n =\alpha_n \varphi_{n-1}$ $(0 \; {\rm if}\; n=0)$ and $B_\varphi^{\bm{\alpha}}\varphi_n =\alpha_{n+1}\varphi_{n+1}$, $n=0,1, \cdots$, it seems natural to call the operators $H_\varphi^{\bm{\alpha}}$, $A_\varphi^{\bm{\alpha}}$ and $B_\varphi^{\bm{\alpha}}$ the non-self adjoint Hamiltonian, and the generalized lowering and raising operators for $\{ \varphi_n \}$, respectively. Similarly, since $H_\psi^{\bm{\alpha}}\psi_n =\alpha_n \psi_n$, $A_\psi^{\bm{\alpha}}\psi_n= \alpha_n\psi_{n-1}$ $(0 \; {\rm if} \;n=0)$ and $B_\psi^{\bm{\alpha}}\psi_n =\alpha_{n+1}\psi_{n+1}$, the operators $H_\psi^{\bm{\alpha}}$, $A_\psi^{\bm{\alpha}}$, $B_\psi^{\bm{\alpha}}$ are called the non-self adjoint Hamiltonian, generalized lowering operator and raising operator for $\{ \psi_n \}$ respectively.\\
Then, it is interesting to understand under what conditions biorthogonal sequences $\F_\varphi$ and $\F_\psi$ are generalized Riesz system, which is what we will discuss in this paper.

Studies on this subject have been undertaken in \cite{atsushi, hiro_taka, hiro1,hiro2}. Here we want to explore this question in a more general framework.

Let $D_\varphi$ and $D_\psi$ be the linear spans of the biorthogonal sequences $\F_\varphi$ and $\F_\psi$, respectively, and define the subspaces $D(\varphi)$ and $D(\psi)$ in $\Hil$ by
\begin{eqnarray}
D(\varphi)
&=& \{ x\in \Hil ; \sum_{n=0}^\infty |\ip{x}{\varphi_n}|^2 < \infty \}, \nonumber \\
D(\psi)
&=& \{ x \in \Hil ; \sum_{n=0}^\infty |\ip{x}{\psi_n}|^2 < \infty \}. \nonumber
\end{eqnarray}
Clearly, $D_\psi \subset D(\varphi)$ and $D_\varphi \subset D(\psi)$. In \cite{hiro1}, one of us has shown that if both $D_{\varphi}$ and $D_{\psi}$ are dense in $\Hil$ (this case is called {\it regular}), then $\F_{\varphi}$ and $\F_{\psi}$ are generalized Riesz systems. After that, in \cite{hiro2}, it was proved that, if either $D_{\varphi}$ and $D(\varphi)$, or $D_{\psi}$ and $D(\psi)$, are dense in $\Hil$ (the case is called {\it semiregular}), again $\F_{\varphi}$ and $\F_{\psi}$ are generalized Riesz systems. Hence we will consider under what conditions $\F_\varphi$ and $\F_\psi$ are generalized Riesz systems when none of the above conditions is satisfied. In \cite{bit_jmp2018}, we have proved that this holds under the assumptions that $\F_\varphi $ and $\F_\psi $ are biorthogonal and, at the same time, $\D$-quasi bases, in the sense that
\begin{eqnarray}
\sum_{n=0}^\infty \ip{x}{\varphi_n}\ip{\psi_n}{y}=\ip{x}{y}, \quad \forall x,y \in \D,\nonumber
\end{eqnarray}
 where $\D$ is a dense subspace in $\Hil$ such that $\F_\varphi \cup \F_\psi \subset \D \subset D(\varphi) \cap D(\psi)$, with some additional assumptions.
 In this paper we shall show that this result holds in a more general case. In Section 3 we define the notion of $(\D,\E)$-quasi bases which is a generalization of $\D$-quasi bases as follows:
\begin{eqnarray}
\sum_{n=0}^\infty \ip{x}{\varphi_n}\ip{\psi_n}{y} =\ip{x}{y}, \quad \forall x\in \D, \,y \in \E\nonumber
\end{eqnarray}
where $\D$ and $\E$ are dense subspaces in $\Hil$ such that $D_\psi \subset \D \subset D(\varphi)$ and $D_\varphi \subset \E \subset D(\psi)$, and we show in Theorem 3.2 that, under this condition, $\F_\varphi$ and $\F_\psi$ are generalized Riesz systems.

In Section 4, we shall investigate non-self adjoint Hamiltonians, generalized lowering and raising operators constructed from $(\D,\E)$-quasi bases. This analysis can be relevant for concrete physical applications, and extends what already deduced, for instance, in \cite{bit2013,hiro1,bb2017}.


\section{Preliminaries}
In this section we review some results on generalized Riesz systems needed in the rest of the paper. By Lemma 3.2, \cite{hiro2}, we have the following\\
\par
{\bf Lemma 2.1.} {\it Let $\{ \varphi_{n} \}$ be a generalized Riesz basis with a constructing pair $(\F_e,T)$. Then, we have the following statements.
\par
\hspace{3mm} (1) $T^{\ast}$ has a densely defined inverse and $(T^{\ast})^{-1}= (T^{-1})^{\ast}$.
\par
\hspace{3mm} (2) Let $\psi_{n} := (T^{-1})^{\ast} e_{n}$, $n=0,1, \cdots$.
Then, $\{ \varphi_{n} \}$ and $\{ \psi_{n} \}$ are biorthogonal and $(T^{-1})^{\ast}$ is a densely defined closed operator in ${\cal H}$ with densely defined inverse $T^{\ast}$. Hence $\{ \psi_{n} \}$ is a generalized Riesz basis with a constructing pair $(\F_e , (T^{-1})^{\ast} )$.
\par
\hspace{3mm} (3) $D(\varphi) \cap D(\psi)$ is dense in ${\cal H}$.\\}
\par
Next, for any ONB $\{ e_n \}$ in $\Hil$ and a sequence $\{ \varphi_n \}$ in $\Hil$, we introduce the operators $T_{\varphi,\bm{e}}^0$, $T_{\varphi,\bm{e}}$ and $T_{\bm{e},\varphi}$ as follows:
\begin{eqnarray}
T_{\varphi,\bm{e}}^0
&:=& \; {\rm the \; linear \; operator \; defined \; by} \;\;\; T_{\varphi,\bm{e}}^0 e_n =\varphi_n,\;\;\; n=0,1, \cdots , \nonumber \\
T_{\varphi,\bm{e}}
&:=& \sum_{n=0}^\infty \varphi_n \otimes \bar{e}_n , \nonumber \\
T_{\bm{e},\varphi}
&:=& \sum_{n=0}^\infty e_n \otimes \bar{\varphi}_n. \nonumber
\end{eqnarray}
Similarly we can introduce, for the set $\{\psi_n\}$ in  Lemma 2.1, the operators$T_{\psi,\bm{e}}^0$, $T_{\psi,\bm{e}}$ and $T_{\bm{e},\psi}$. These operators had  a role  in \cite{hiro2} and will also be relevant here. By Lemmas 2.1, 2.2 in \cite{hiro2} we get the following\\
\par
{\bf Lemma 2.2.} {\it (1) $T_{\varphi,\bm{e}}$ is a densely defined linear operator in $\Hil$ such that
\begin{eqnarray}
T_{\varphi,\bm{e}} \supseteq T_{\varphi,\bm{e}}^0 \;\;\; {\rm and} \;\;\; T_{\varphi,\bm{e}}^0 e_n =T_{\varphi,\bm{e}} e_n =\varphi_n , \;\;\; n=0,1, \cdots . \nonumber
\end{eqnarray}
\par
(2) $D(T_{\bm{e},\varphi})=D(\varphi)$ and $(T_{\varphi,\bm{e}}^0)^\ast =T_{\varphi,\bm{e}}^\ast =T_{\bm{e},\varphi}$.\\
\par
(3) $T_{\varphi,\bm{e}}^0$ is closable if and only if $T_{\varphi,\bm{e}}$ is closable if and only if $D(\varphi)$ is dense in $\Hil$. If this holds, then
\begin{eqnarray}
\bar{T}_{\varphi,\bm{e}}^0 = \bar{T}_{\varphi,\bm{e}} =(T_{\bm{e},\varphi})^\ast.
\end{eqnarray}
}
Furthermore, by Lemmas 2.3 and 2.4 in \cite{hiro2} we have\\
\par
{\bf Lemma 2.3.} {\it Let $\F_\varphi$ and $\F_\psi$ be biorthogonal sequences in $\Hil$. Suppose that $D(\varphi)$ is dense in $\Hil$. Then we have the following
\par (1) $\bar{T}_{\varphi,\bm{e}}$ has an inverse and $\bar{T}_{\varphi,\bm{e}}^{-1} \subseteq T_{\bm{e},\psi} =(T_{\psi,\bm{e}})^\ast$.\\
\par
(2) The following (i), (ii) and (iii) are equivalent:
\par
(i) $D_{\phi} $ is dense in ${\cal H}$.
\par
(ii) $\bar{T}_{\varphi,\bm{e}}$ has a densely defined inverse.
\par
(iii) $T_{\varphi,\bm{e}}^\ast (=T_{\bm{e},\varphi})$ has a densely defined inverse.\\
If this holds, then $T_{\bm{e},\varphi}^{-1}=(\bar{T}_{\varphi,\bm{e}}^{-1})^\ast$.\\
\par
(3) For the operators $T_{\psi,\bm{e}}$ and $T_{\bm{e},\psi}$ the same results as in (1) and (2) hold.\\\\
}
By \cite{hiro2}, Theorem 3.4, we also get\\
\par
{\bf Theorem 2.4.} {\it Let $\F_\varphi$ and $\F_\psi$ be biorthogonal sequences in ${\cal H}$, and let $\F_e$ be an arbitrary ONB in ${\cal H}$. Then the following statements hold:
\par
\hspace{3mm} (1) Suppose that both $D_\varphi$ and $D_\psi$ are dense in $\Hil$. Then $\F_\varphi$ (resp. $\F_\psi$) is a generalized Riesz basis with constructing pairs $(\F_e, \bar{T}_{\phi,\bm{e}})$ and $(\F_e, T_{\bm{e},\psi}^{-1})$ (resp. $(\F_e, \bar{T}_{\psi,\bm{e}})$ and $(\F_e, T_{\bm{e},\phi}^{-1})$), and $\bar{T}_{\phi,\bm{e}}$ (resp. $\bar{T}_{\psi,\bm{e}}$) is the minimum among constructing operators of the generalized Riesz basis $\F_\varphi$ (resp. $\F_\psi$), and $T_{\bm{e},\psi}^{-1}$ (resp. $T_{\bm{e},\phi}^{-1}$) is the maximum among constructing operators of $\F_\varphi$ (resp. $\F_\psi$). Furthermore, any closed operator $T$ (resp. $K$) satisfying $\bar{T}_{\phi,\bm{e}} \subset T \subset T_{\bm{e},\psi}^{-1}$ (resp. $\bar{T}_{\psi,\bm{e}} \subset K \subset T_{\bm{e},\phi}^{-1}$) is a constructing operator for $\F_\varphi$ (resp. $\F_\psi$).
\par
\hspace{3mm} (2) Suppose that $D(\phi)$ and $D_{\phi}$ are dense in ${\cal H}$. Then $\F_\varphi$ (resp. $\F_\psi$) is a generalized Riesz basis with a constructing pair $(\F_e, \bar{T}_{\phi,\bm{e}})$ (resp. $(\F_e , T_{\bm{e}, \phi}^{-1} )$) and the constructing operator $\bar{T}_{\phi ,\bm{e}}$ (resp. $T_{\bm{e},\phi}^{-1}$) is the minimum (resp. the maximum) among constructing operators of $\F_\varphi$ (resp. $\F_\psi$).
\par
\hspace{3mm} (3) Suppose that $D(\psi)$ and $D_{\psi}$ are dense in ${\cal H}$. Then $\F_\psi$ (resp. $\F_\varphi$) is a generalized Riesz basis with a constructing pair $(\F_e, \bar{T}_{\psi,\bm{e}})$ (resp. $( \F_e , T_{\bm{e}, \psi}^{-1} )$) and the constructing operator $\bar{T}_{\psi ,\bm{e}}$ (resp. $T_{\bm{e},\psi}^{-1}$) is the minimum (resp. the maximum) among constructing operators of $\F_\psi$ (resp. $\F_\varphi$).\\\\
}
Theorem 2.4 shows how the problem stated in Introduction (under what conditions biorthogonal sequences $\F_\varphi$ and $\F_\psi$ are generalized Riesz systems) can be solved in the case when either $D_\varphi$ and $D(\psi)$ or $D_\psi$ and $D(\varphi)$ are dense in $\Hil$. But, this problem has not been solved completely  in case that both $D_\varphi$ and $D_\psi$ are not dense in $\Hil$, which is what is interesting for us here. We will see how the operators $T_{\varphi,\bm{e}}$, $T_{\bm{e},\varphi}$, $T_{\psi,\bm{e}}$ and $T_{\bm{e},\psi}$ will be relevant in our analysis, together with the $(\D,\E)$-quasi bases we will define in the next section. This result is a generalization of the one obtained in \cite{bit_jmp2018}.

\section{$(\D,\E)$-quasi bases}\label{sect3}

In this section we extend the notion of $\D$-quasi bases by introducing a second dense subset $\E$ of the Hilbert space $\Hil$, and we relate these new families of vectors to generalized Riesz systems.

 \par
\begin{defn} Let $\F_\varphi$ and $\F_\psi$ be biorthogonal sequences in $\Hil$ and let $\D$ and $\E$ be dense subspaces such that $D_\psi \subseteq \D \subseteq D(\varphi)$ and $D_\varphi \subseteq \E \subseteq D(\psi)$. Then $(\{\varphi_n\},\{\psi_n\})$ is said to be a $(\D,\E)$-quasi basis if
\begin{eqnarray}
\sum_{k=0}^{\infty} \ip{x}{\varphi_k}\ip{\psi_k}{y}=\ip{x}{y} \nonumber
\end{eqnarray}
for all $x\in \D$ and $y\in\E$.
\end{defn}

It is clear that any  $(\D,\D)$-quasi basis is  a $\D$-quasi basis in the sense of \cite{bag1}.

\vspace{2mm}

{
{\bf Example 1:}-- A very simple example of a $(\D,\E)$-quasi basis can be constructed as follows.
Let $\{e_n\}$ be an ONB for $\Hil$. Let ${\alpha_n}$ an unbounded sequence of positive real numbers having $0$ as limit point. To be more concrete, let us take
$$ \alpha_n=\left\{\begin{array}{ll} \frac{1}{n} & \mbox{ if $n$ is even} \\ n &\mbox{ if $n$ is odd.}  \end{array} \right.$$
Let $Tx=\sum_{n=1}^\infty \alpha_n \ip{x}{e_n}e_n$ be defined on the domain
$$D(T)=\left\{x\in \Hil: \sum_{k=0}^\infty (2k+1)^2|(x,e_{2k+1})|^2<\infty\right\}.$$
The operator $T$ is unbounded, selfadjoint, invertible with inverse $T^{-1}$ is defined as $T^{-1}y=\sum_{n=1}^\infty \alpha_n^{-1} \ip{x}{e_n}e_n$ on the domain
$$D(T^{-1})=\left\{y\in \Hil: \sum_{k=1}^\infty (2k)^2|(y,e_{2k})|^2<\infty\right\}.$$
Both $D(T)$ and $D(T^{-1})$ are dense subspaces of $\Hil$ and they are different as one can easily check. Let us set $\varphi_n=Te_n$ and $\psi_n=T^{-1}e_n$, $n\in {\mb N}$. The $\varphi_n=\alpha_ne_n$, while $\psi_n=T^{-1}e_n=\alpha_n^{-1}e_n$. Moreover $D(\varphi)=D(T)$, $D(\psi)=D(T^{-1})$.
Then we have
\begin{eqnarray*}
\sum_{n=0}^{\infty} \ip{x}{\varphi_n}\ip{\psi_n}{y}=\sum_{n=0}^{\infty} \ip{x}{\alpha_ne_n}\ip{\alpha_n^{-1}e_n}{y}=\ip{x}{y}.
\end{eqnarray*}
Thus, $(\F_\varphi, \F_\psi)$ is a $(D(\varphi), D(\psi))$-quasi basis.

\vspace{2mm}

{\bf Example 2:}-- Let $H_0=p^2+x^2$ be (twice) the self-adjoint Hamiltonian  of a one-dimensional harmonic oscillator. We consider $H_0$ to be the closure of the operator acting in the same way on the Schwartz space ${\mc S}({\mb R})$, and $T=\1+p^2$, which is an unbounded self-adjoint operator defined on $D(T)=W^{2,2}({\mb R})$, the Sobolev space of functions having first and second order weak derivative in $L^2({\mb R})$. The operator $T=H_0+\1-x^2$ is unbounded, invertible with bounded inverse $T^{-1}$. The eigensystem of $H_0$ is well known:
$$
H_0e_n(x)=(2n+1)e_n(x), \qquad e_n(x)=\frac{1}{\sqrt{2^nn!\pi^{1/2}}}\,H_n(x)\,e^{-x^2/2}
$$
$n\geq0$, where $H_n(x)$ is the n-th Hermite polynomial.
{ Moreover,
\begin{equation} \label{eqn_Hzero}H_0 f= \sum_{n=0}^\infty (2n+1) (e_n \otimes \bar{e}_n)f= \sum_{n=0}^\infty (2n+1)(f,e_n)e_n, \; \forall f\in {\mc S}({\mb R}).\end{equation}

}

It is easy to see that $e_n(x)\in D(T)$ so that we can define $\varphi_n(x)=(Te_n)(x)$ and $\psi_n(x)=(T^{-1}e_n)(x)$. We get
$$
\varphi_n(x)=(2+2n-x^2)e_n(x),\qquad \psi_n(x)=\frac{1}{2}\int_{\Bbb R}e^{-|x-y|}\,e_n(y)\,dy.
$$
These functions are respectively eigenvectors of $H=TH_0T^{-1}$ and $H^\dagger$, with eigenvalue $2n+1$. Some computations show that, for instance,
$$
H=H_0-2\left(\1+2x\frac{d}{dx}\right)\,G\star.
$$
Here $G(x)$ is the Green function of $T$, $G(x)=\frac{1}{2}e^{-|x|}$, and $(G\star f)(x)=\int_{\Bbb R}G(x-y)f(y)dy$, for all $f(x)\in L^2(\Bbb R)$. Of course we can rewrite $H$ as follows: $H=H_0-2(\1+2ixp)\,G\star$, which is manifestly non self-adjoint.

The sets $\F_\varphi$ and $\F_\psi$ are biorthogonal and form a $(D(T),\Hil)$-quasi basis, since
$$
\sum_{k=0}^{\infty} \ip{f}{\varphi_k}\ip{\psi_k}{g}=\ip{f}{g},
$$
for all $f(x)\in D(T)$ and $g(x)\in L^2(\Bbb R)$.
}

\medskip
Let $\F_\varphi$ and $\F_\psi$ be biorthogonal sequences. Suppose that $\F_\varphi$ is a generalized Riesz system with constructing pair $(\F_{\bm{e}},T)$. We put $\psi^T_n := (T^{-1})^\ast e_n$, $n=0,1, \cdots$. Then $\F_\psi$ and $\F_\psi^T := \{ \psi^T_n \}$ are biorthogonal sequences, but $\F_\psi$ does not necessarily coincide with $\F_\psi^T$. For this reason we will call the constructing pair $(\F_{\bm{e}},T)$ natural for the biorthogonal sequences $\F_\varphi$ and $\F_\psi$ if $\F_\psi =\F_\psi^T$.
If $D_\varphi$ is dense in $\Hil$, then $(\F_e,T)$ is automatically natural for $\F_\varphi$ and $\F_\psi$.

The next theorem, which is the main result of this paper, shows that the notion of $(\D,\E)$-quasi basis is intimately linked to that of generalized Riesz system.

\begin{thm} Let $(\F_\varphi,\F_\psi)$ be a biorthogonal pair and $\D$ and $\E$ be dense subspaces in $\Hil$ such that $D_\psi \subseteq\D\subseteq D(\varphi)$ and $D_\varphi\subseteq \E\subseteq D(\psi)$. Then the following statements are equivalent:
\par
(i) $(\F_\varphi,\F_\psi)$ is a $(\D,\E)$-quasi basis.
\par
(ii) For any ONB $\F_e=\{e_n\}$ in $\Hil$, $\F_\varphi$ is a generalized Riesz system with a natural constructing pair $(\F_e,T)$ satisfying $D(T^{\ast}) \supseteq \D$ and $D(T^{-1}) \supseteq\E$.
\par
(iii) For any ONB $\F_e=\{e_n\}$ in $\Hil$, $\F_\psi$ is a generalized Riesz system with a natural constructing pair $(\F_e,K)$ satisfying $D(K^{\ast}) \supseteq \E$ and $D(K^{-1}) \supseteq \D$.\\\\
If the statement (i) holds, then we can take $(\overline{T_{\bm{e},\psi}\lceil_\E})^{-1}$ and $(\overline{T_{\bm{e},\varphi}\lceil_\D})^{-1}$ as $T$ and $K$ in (ii) and (iii), respectively. If $D_\varphi$ is not dense in $\Hil$, then $T_{\bm{e},\psi}$ does not have an inverse, but $T_{\bm{e},\psi}\lceil_\E$ has an inverse.
\end{thm}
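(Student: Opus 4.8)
The plan is to prove the cycle (i)$\Rightarrow$(ii)$\Rightarrow$(i) and to obtain (i)$\Leftrightarrow$(iii) by symmetry: taking complex conjugates of both sides shows that ``$(\F_\varphi,\F_\psi)$ is a $(\D,\E)$-quasi basis'' is the same as ``$(\F_\psi,\F_\varphi)$ is an $(\E,\D)$-quasi basis'', while statement (iii) is precisely statement (ii) written for this exchanged data (with $K$ in the role of $T$); hence (i)$\Leftrightarrow$(iii) follows from (i)$\Leftrightarrow$(ii), and $K=(\overline{T_{\bm{e},\varphi}\lceil_\D})^{-1}$ is the image of $T=(\overline{T_{\bm{e},\psi}\lceil_\E})^{-1}$ under the exchange. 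The computational key is that, since $\F_e$ is an ONB, for $x\in\D\subseteq D(\varphi)$ and $y\in\E\subseteq D(\psi)$ one has $\ip{T_{\bm{e},\varphi}x}{T_{\bm{e},\psi}y}=\sum_{n=0}^\infty\ip{x}{\varphi_n}\ip{\psi_n}{y}$ (expand both vectors in $\F_e$), so (i) is equivalent to $\ip{T_{\bm{e},\varphi}x}{T_{\bm{e},\psi}y}=\ip{x}{y}$ for all $x\in\D$, $y\in\E$. Since $\D,\E$ dense forces $D(\varphi),D(\psi)$ dense, Lemma 2.2 gives that $T_{\varphi,\bm{e}},T_{\psi,\bm{e}}$ are closable, that $T_{\bm{e},\varphi}=(T_{\varphi,\bm{e}})^\ast$ and $T_{\bm{e},\psi}=(T_{\psi,\bm{e}})^\ast$ are closed, and that $\overline{T_{\psi,\bm{e}}}\,e_n=\psi_n$.

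For (i)$\Rightarrow$(ii), fix an arbitrary ONB $\F_e$ and put $S:=\overline{T_{\bm{e},\psi}\lceil_\E}$; being a restriction of the closed operator $T_{\bm{e},\psi}$ it is closable, so $S$ is closed, densely defined, and $S\subseteq T_{\bm{e},\psi}$. First I would check $S$ is injective: if $v\in D(S)$ with $Sv=0$, pick $y_k\in\E$ with $y_k\to v$, $Sy_k\to 0$, and let $k\to\infty$ in $\ip{T_{\bm{e},\varphi}x}{Sy_k}=\ip{x}{y_k}$ to get $\ip{x}{v}=0$ for all $x\in\D$, hence $v=0$. Set $T:=S^{-1}$. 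By biorthogonality $S\varphi_n=T_{\bm{e},\psi}\varphi_n=e_n$ (using $\varphi_n\in D_\varphi\subseteq\E$), so $Te_n=\varphi_n$ and $\F_e\subseteq{\rm Ran}(S)=D(T)$; thus $D(T)$ is dense, $T$ is closed, and $T^{-1}=S$ is densely defined. From $S\subseteq(T_{\psi,\bm{e}})^\ast$ we get $S^\ast\supseteq(T_{\psi,\bm{e}})^{\ast\ast}=\overline{T_{\psi,\bm{e}}}$, so $e_n\in D(S^\ast)=D((T^{-1})^\ast)$ with $(T^{-1})^\ast e_n=\psi_n$; this gives simultaneously $\F_e\subseteq D((T^{-1})^\ast)$ and naturality of $(\F_e,T)$. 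Hence $\F_\varphi$ is a generalized Riesz system with constructing pair $(\F_e,T)$, and $D(T^{-1})=D(S)\supseteq\E$. Finally, for $\D\subseteq D(T^\ast)$: given $x\in\D$, the identity $\ip{Sy}{T_{\bm{e},\varphi}x}=\ip{y}{x}$ (for $y\in\E$) extends by closedness of $S$ to all $y\in D(S)$, i.e. $T_{\bm{e},\varphi}x\in D(S^\ast)$ and $S^\ast(T_{\bm{e},\varphi}x)=x$; therefore $x\in{\rm Ran}(S^\ast)=D((S^\ast)^{-1})=D((S^{-1})^\ast)=D(T^\ast)$, the last steps using that $S$ is closed, densely defined, injective, with dense range.

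For (ii)$\Rightarrow$(i), let $(\F_e,T)$ be a natural constructing pair for $\F_\varphi,\F_\psi$ with $D(T^\ast)\supseteq\D$ and $D(T^{-1})\supseteq\E$, so $\varphi_n=Te_n$ and $\psi_n=(T^{-1})^\ast e_n$. For $x\in\D$, $y\in\E$ one has $\ip{x}{\varphi_n}=\ip{T^\ast x}{e_n}$ and $\ip{\psi_n}{y}=\ip{e_n}{T^{-1}y}$ (using $(T^{-1})^{\ast\ast}=T^{-1}$, valid since $T^{-1}$ is closed and densely defined), whence by Parseval $\sum_{n=0}^\infty\ip{x}{\varphi_n}\ip{\psi_n}{y}=\ip{T^\ast x}{T^{-1}y}=\ip{x}{T(T^{-1}y)}=\ip{x}{y}$, the middle equality holding because $T^{-1}y\in D(T)$ and $(T^\ast)^\ast=T$. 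This is (i). Concerning the closing remarks, the displayed $T$ and $K$ are exactly those manufactured above; and $T_{\bm{e},\psi}$ has kernel $D_\psi^{\perp}$ (likewise $T_{\bm{e},\varphi}$ has kernel $D_\varphi^{\perp}$), so these full operators need not be injective, whereas the restrictions $T_{\bm{e},\psi}\lceil_\E$ and $T_{\bm{e},\varphi}\lceil_\D$ are injective by the argument of the previous paragraph — it is the restrictions, not the full operators, that carry the content of (i).

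The step I expect to be the main obstacle is $\D\subseteq D(T^\ast)$ in (i)$\Rightarrow$(ii): one has to read the quasi-basis identity correctly as an adjointness relation for $S$, be careful that it is initially available only for $y\in\E$ and must be propagated to the whole domain of the closure, and then invoke the interchange $(S^{-1})^\ast=(S^\ast)^{-1}$, which is licit only once closedness, injectivity and density of the range of $S$ have been secured. A secondary point worth care is that (ii)/(iii) must be phrased with the \emph{natural} constructing pair: the relation $\psi_n=(T^{-1})^\ast e_n$, and not merely biorthogonality of $\F_\psi$ with $\F_\varphi$, is genuinely used in (ii)$\Rightarrow$(i).
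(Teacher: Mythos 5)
Your proof is correct and follows essentially the same route as the paper's: you establish the identity $\ip{T_{\bm{e},\varphi}x}{T_{\bm{e},\psi}y}=\ip{x}{y}$, take $T=(\overline{T_{\bm{e},\psi}\lceil_\E})^{-1}$ (resp.\ $K=(\overline{T_{\bm{e},\varphi}\lceil_\D})^{-1}$) as the constructing operator, and reverse the Parseval computation for (ii)$\Rightarrow$(i). You are in fact more explicit than the paper on two points it leaves implicit — the injectivity of $\overline{T_{\bm{e},\psi}\lceil_\E}$ and the verification of $\D\subseteq D(T^\ast)$ via $(S^\ast)^{-1}=(S^{-1})^\ast$ — which is a welcome addition rather than a deviation.
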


\begin{proof} Take arbitrary $x\in\D$ and $y\in\E$. Since $x\in D(T_{\bm{e},\varphi})=D(\varphi)$ and $y\in D(T_{\bm{e},\psi})=D(\psi)$, we have
\begin{eqnarray}
\ip{x}{y}
&=& \sum_{n=0}^{\infty} \ip{x}{\varphi_n}\ip{\psi_n}{y}
= \sum_{n=0}^{\infty} \ip{x}{T_{\varphi,\bm{e}}e_n}\ip{T_{\psi,\bm{e}}e_n}{y} \nonumber \\
&=& \sum_{n=0}^{\infty} \ip{T_{\bm{e},\varphi}x}{e_n}\ip{e_n}{T_{\bm{e},\psi}y}
= \ip{T_{\bm{e},\varphi}x}{T_{\bm{e},\psi}y}, \nonumber
\end{eqnarray}
which implies that
\begin{eqnarray}
(\overline{T_{\bm{e},\psi}\lceil_\E})^{-1} \subseteq (T_{\bm{e},\varphi}\lceil_\D)^{\ast} \;\;\; {\rm and} \;\;\; (\overline{T_{\bm{e},\varphi}\lceil_\D})^{-1} \subseteq (T_{\bm{e},\psi}\lceil_\E)^{\ast}.
\end{eqnarray}
Now we put $T := (\overline{T_{\bm{e},\psi}\lceil_\E})^{-1}$. Since $D(T)= \overline{T_{\bm{e},\psi}\lceil_\E}D(\overline{T_{\bm{e},\psi}\lceil_\E}) \supseteq \overline{T_{\bm{e},\psi}\lceil_\E} \E \supseteq  \overline{T_{\bm{e},\psi}\lceil_\E} D_\varphi=D_{\bm{e}}$ and $D((T^{-1})^\ast)= D(  (\overline{T_{\bm{e},\psi}\lceil_\E})^{\ast}) \supseteq D( (\overline{T_{\bm{e},\varphi}\lceil_\D})^{-1} )=\overline{T_{\bm{e},\varphi}\lceil_\D}D(\overline{T_{\bm{e},\varphi}\lceil_\D}) \supseteq \overline{T_{\bm{e},\varphi}\lceil_\D}D_\psi=D_{\bm{e}}$, it follows that $T$ is a densely defined closed operator in $\Hil$ with densely defined inverse such that $\bm{e} \subseteq D(T) \cap D((T^{-1})^{\ast})$. Furthermore, we have
\begin{eqnarray}
Te_n
&=& (\overline{T_{\bm{e},\psi}\lceil_\E})^{-1}\overline{T_{\bm{e},\psi}\lceil_\E} \varphi_n =\varphi_n, \nonumber \\
(T^{-1})^{\ast}e_n
&=& (\overline{T_{\bm{e},\psi}\lceil_\E})^{\ast}e_n= T_{\psi,\bm{e}}e_n=\psi_n, \;\;\; n=0,1, \cdots . \nonumber
\end{eqnarray}
Thus, $\F_\varphi$ is a generalized Riesz system with a natural constructing pair $(\F_e,T)$. Furthermore, we have $D(T^{-1})=D(\overline{T_{\bm{e},\psi}\lceil_\E}) \supseteq \E$ and by (3.1) $D(T^{\ast}) \supseteq D(\overline{T_{\bm{e},\varphi}\lceil_\D}) \supseteq \D$. Thus (i) $\Rightarrow$ (ii).\\
 In a similar way, setting $K= (\overline{T_{\bm{e},\varphi}\lceil_\D})^{-1}$, we can show that $\F_\psi$ is a generalized Riesz system for a natural constructing pair $(\F_e,K)$ satisfying $D(K^{\ast}) \supseteq \E$ and $D(K^{-1}) \supseteq \D$. Thus (i) implies (iii).\\
(ii) $\Rightarrow$ (i) Take arbitrary $x\in\D$ and $y\in\E$. Since
\begin{eqnarray}
\sum_{k=0}^{\infty} \ip{x}{\varphi_k}\ip{\psi_k}{y}
&=& \sum_{k=0}^{\infty} \ip{x}{Te_n}\ip{(T^{-1})^{\ast} e_n}{y} \nonumber \\
&=& \sum_{k=0}^{\infty}\ip{T^{\ast}x}{e_n}\ip{e_n}{T^{-1}y}
= \ip{T^{\ast}x}{T^{-1}y}
= \ip{x}{y}, \nonumber
\end{eqnarray}
it follows that $(\F_\varphi,\F_\psi)$ is a $(\D,\E)$-quasi basis. Similarly we can show (iii) $\Rightarrow$ (i). This completes the proof.
\end{proof}

For $\D$-quasi basis, we have the following \\
\par
\begin{cor} Let $\F_\varphi$ and $\F_\psi$ be biorthogonal sequences and $\D$ be a dense subspace in $\Hil$ such that $D_\varphi \cup D_\psi \subseteq \D \subseteq D(\varphi) \cap D(\psi)$. Then the following statements are equivalent:
\par
(i) $(\F_\varphi,\F_\psi)$ is a $\D$-quasi basis.
\par
(ii) For any ONB $\F_e=\{e_n\}$ in $\Hil$, $\F_\varphi$ is a generalized Riesz system with a natural constructing pair $(\F_e,T)$ satisfying $D(T^{\ast}) \cap D(T^{-1}) \supseteq \D$.
\par
(iii) For any ONB $\F_e=\{e_n\}$ in $\Hil$, $\F_\psi$ is a generalized Riesz system with a natural constructing pair $(\F_e,K)$ satisfying $D(K^{\ast}) \cap D(K^{-1}) \supseteq \D$.\\
If (i) holds, then we can take $(\overline{T_{\bm{e},\psi}\lceil_\D})^{-1}$ and $(\overline{T_{\bm{e},\varphi}\lceil_\D})^{-1}$ as $T$ in (ii) and $K$ in (iii), respectively.\\
\end{cor}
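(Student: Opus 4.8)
The plan is to obtain Corollary 3.3 as the special case $\E=\D$ of Theorem 3.2, so the argument is essentially one of bookkeeping. First I would record that, under the standing hypothesis $D_\varphi\cup D_\psi\subseteq\D\subseteq D(\varphi)\cap D(\psi)$ with $\D$ dense, the choice $\E:=\D$ satisfies all the requirements of Theorem 3.2: indeed $D_\psi\subseteq\D\subseteq D(\varphi)$ and $D_\varphi\subseteq\E=\D\subseteq D(\psi)$, and both $\D$ and $\E$ are dense. Moreover a $\D$-quasi basis in the sense of \cite{bag1} is exactly a $(\D,\D)$-quasi basis, so statement (i) of the corollary is literally statement (i) of Theorem 3.2 for this $\E$.

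Next I would translate the conclusions of Theorem 3.2. Statement (ii) there asserts that $\F_\varphi$ is a generalized Riesz system with a natural constructing pair $(\F_e,T)$ such that $D(T^{\ast})\supseteq\D$ and $D(T^{-1})\supseteq\E=\D$; the conjunction of these two inclusions is precisely $D(T^{\ast})\cap D(T^{-1})\supseteq\D$, which is statement (ii) of the corollary, and conversely any $T$ with $D(T^{\ast})\cap D(T^{-1})\supseteq\D$ satisfies both inclusions separately. Hence the equivalence (i)$\Leftrightarrow$(ii) of the corollary is exactly (i)$\Leftrightarrow$(ii) of Theorem 3.2. The very same reasoning, now using (i)$\Leftrightarrow$(iii) of Theorem 3.2 (which gives $D(K^{\ast})\supseteq\E=\D$ and $D(K^{-1})\supseteq\D$), yields (i)$\Leftrightarrow$(iii) of the corollary. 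For the explicit operators, the last assertion of Theorem 3.2 lets one take $T=(\overline{T_{\bm{e},\psi}\lceil_\E})^{-1}$ and $K=(\overline{T_{\bm{e},\varphi}\lceil_\D})^{-1}$; substituting $\E=\D$ gives $T=(\overline{T_{\bm{e},\psi}\lceil_\D})^{-1}$ and $K=(\overline{T_{\bm{e},\varphi}\lceil_\D})^{-1}$, as stated.

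I do not expect any genuine obstacle here: the only points that need care are checking that the two inclusion constraints of Theorem 3.2 collapse to the single intersection constraint of the corollary, and that the containment chains transfer correctly when $\E$ is set equal to $\D$ — both are immediate. Should a self-contained proof be preferred, one could instead simply repeat the short computation in the proof of Theorem 3.2 with $x,y\in\D$ throughout (obtaining $\ip{x}{y}=\ip{T_{\bm{e},\varphi}x}{T_{\bm{e},\psi}y}$ and the corresponding inclusions $(\overline{T_{\bm{e},\psi}\lceil_\D})^{-1}\subseteq(T_{\bm{e},\varphi}\lceil_\D)^{\ast}$, etc.), but invoking Theorem 3.2 is cleaner and avoids duplication.
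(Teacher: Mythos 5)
Your proposal is correct and matches the paper's intent exactly: the paper states this corollary immediately after Theorem 3.2 with no separate proof, precisely because it is the specialization $\E=\D$, and your bookkeeping (hypotheses transfer, the two inclusions collapsing to the single intersection condition, and the substitution in the explicit formulas for $T$ and $K$) is the whole argument.
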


By Theorem 3.2, if $(\F_\varphi,\F_\psi)$ is a $(\D,\E)$-quasi basis, then, for any ONB $\F_e= \{ e_n \}$, $(\overline{T_{\bm{e},\psi}\lceil_\E})^{-1}$ and $(\overline{T_{\bm{e},\varphi}\lceil_\D})^{\ast}$ are constructing operators for the generalized Riesz system $\F_\varphi$ such that $(\overline{T_{\bm{e},\psi}\lceil_\E})^{-1} \subseteq (\overline{T_{\bm{e},\varphi}\lceil_\D})^{\ast}$, and $(\overline{T_{\bm{e},\varphi}\lceil_\D})^{-1}$ and $(\overline{T_{\bm{e},\psi}\lceil_\E})^{\ast}$ are constructing operators for the generalized Riesz system $\F_\psi$ such that $(\overline{T_{\bm{e},\varphi}\lceil_\D})^{-1} \subseteq (\overline{T_{\bm{e},\psi}\lceil_\E})^{\ast}$.\\

{\bf Remark.} For a biorthogonal pair $(\F_\varphi,\F_\psi)$ it is clear that $D_\psi \subseteq D(\varphi)$ and $D_\varphi \subseteq D(\psi)$. What is not clear is whether $D_\varphi \subseteq D(\varphi)$ and $D_\psi \subseteq D(\psi)$. For this reason it may be more convenient to work, in some concrete cases, with $(\D,\E)$-quasi bases rather than with $\D$-quasi bases.

\vspace{2mm}

Let $\F_\varphi$ be a generalized Riesz system with constructing pair $(\F_e,T)$. We discuss now when there exists a sequence $\F_\psi$ in $\Hil$ and subspaces $\D$ and $\E$ in $\Hil$ such that $\F_\varphi$ and $\F_\psi$ are biorthogonal and define a $(\D,\E)$-quasi basis:

\vspace{2mm}

\begin{prop} Let $\F_\varphi$ be a generalized Riesz system with a constructing pair $(\F_e,T)$. Then $(\F_\varphi,\F_\psi^T)$ is a $(D(T^{\ast}),D(T^{-1}))$-quasi basis and
$T= \left(T_{\bm{e},\psi^T}\lceil_{D(T^{-1})} \right)^{-1}$, $(T^{-1})^{\ast}= \left(T_{\bm{e},\varphi}\lceil_{D(T^{\ast})} \right)^{-1}$.
\end{prop}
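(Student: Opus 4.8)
The plan is to establish the defining identity of Definition 3.1 by a direct Parseval computation and then to read off the two operator identities by restricting $T_{\bm e,\psi^T}$ and $T_{\bm e,\varphi}$ to the appropriate domains. First I would check that the statement is well posed, i.e.\ that the four inclusions $D_{\psi^T}\subseteq D(T^\ast)\subseteq D(\varphi)$ and $D_\varphi\subseteq D(T^{-1})\subseteq D(\psi^T)$ hold and that $D(T^\ast)$ and $D(T^{-1})$ are dense. Density of $D(T^\ast)$ is automatic since $T$ is closed and densely defined, while $D(T^{-1})=\mathrm{Ran}\,T$ is dense by the definition of a generalized Riesz system; biorthogonality of $\F_\varphi$ and $\F_\psi^T$ is Lemma 2.1(2). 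The two identities to exploit are $\ip{x}{\varphi_n}=\ip{T^\ast x}{e_n}$ for $x\in D(T^\ast)$ (because $\varphi_n=Te_n$) and $\ip{\psi_n^T}{y}=\ip{e_n}{T^{-1}y}$ for $y\in D(T^{-1})$ (because $\psi_n^T=(T^{-1})^\ast e_n=(T^\ast)^{-1}e_n$ by Lemma 2.1(1)); they give $\sum_n|\ip{x}{\varphi_n}|^2=\|T^\ast x\|^2<\infty$ and $\sum_n|\ip{\psi_n^T}{y}|^2=\|T^{-1}y\|^2<\infty$, hence $D(T^\ast)\subseteq D(\varphi)$ and $D(T^{-1})\subseteq D(\psi^T)$, while $\varphi_n=Te_n\in\mathrm{Ran}\,T=D(T^{-1})$ and $\psi_n^T=(T^\ast)^{-1}e_n\in D(T^\ast)$ are immediate.

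Next, for arbitrary $x\in D(T^\ast)$ and $y\in D(T^{-1})$ the sequences $\{\ip{x}{\varphi_n}\}$ and $\{\ip{\psi_n^T}{y}\}$ lie in $\ell^2$ by the previous step, so the series below converges absolutely, and by Parseval for the ONB $\F_e$,
\[
\sum_{n=0}^{\infty}\ip{x}{\varphi_n}\ip{\psi_n^T}{y}
=\sum_{n=0}^{\infty}\ip{T^\ast x}{e_n}\ip{e_n}{T^{-1}y}
=\ip{T^\ast x}{T^{-1}y}
=\ip{x}{T T^{-1}y}=\ip{x}{y},
\]
the last two equalities using $T^{-1}y\in D(T)$ and $x\in D(T^\ast)$. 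This is precisely the $(D(T^\ast),D(T^{-1}))$-quasi basis identity.

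For the operator identities I would simply compute the relevant restrictions. By Lemma 2.2(2), applied to $\{\psi_n^T\}$ and to $\{\varphi_n\}$ respectively, $D(T_{\bm e,\psi^T})=D(\psi^T)\supseteq D(T^{-1})$ and $D(T_{\bm e,\varphi})=D(\varphi)\supseteq D(T^\ast)$, so both restrictions are defined. For $y\in D(T^{-1})$ one gets $T_{\bm e,\psi^T}y=\sum_n\ip{y}{\psi_n^T}e_n=\sum_n\ip{T^{-1}y}{e_n}e_n=T^{-1}y$; hence $T_{\bm e,\psi^T}\lceil_{D(T^{-1})}$ and $T^{-1}$ have the same domain and the same action, so they are equal, and since $T^{-1}$ is closed and injective its inverse is $T$, giving $T=(T_{\bm e,\psi^T}\lceil_{D(T^{-1})})^{-1}$. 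Likewise, for $x\in D(T^\ast)$, $T_{\bm e,\varphi}x=\sum_n\ip{x}{\varphi_n}e_n=\sum_n\ip{T^\ast x}{e_n}e_n=T^\ast x$, so $T_{\bm e,\varphi}\lceil_{D(T^\ast)}=T^\ast$, whose inverse is $(T^\ast)^{-1}=(T^{-1})^\ast$ by Lemma 2.1(1). (One could instead invoke the last sentence of Theorem 3.2 to produce such constructing operators, but the direct computation also shows the closures appearing there are redundant here.)

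I do not expect a genuine obstacle: every step is a short, routine verification. The only points needing care are the domain bookkeeping — making sure the four inclusions of Definition 3.1 really hold so the statement is meaningful — and checking that on the restricted domains $T_{\bm e,\psi^T}$ and $T_{\bm e,\varphi}$ coincide with $T^{-1}$ and $T^\ast$ \emph{exactly} (not merely as extensions), since that is what forces the inverses to come out to be precisely $T$ and $(T^{-1})^\ast$ with no closure operation.
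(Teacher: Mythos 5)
Your proposal is correct and follows essentially the same route as the paper: the quasi-basis identity is the Parseval computation $\sum_n\ip{x}{\varphi_n}\ip{\psi_n^T}{y}=\ip{T^\ast x}{T^{-1}y}=\ip{x}{y}$ (which the paper dismisses as ``clear''), and your direct verification that $T_{\bm e,\varphi}\lceil_{D(T^\ast)}=T^\ast$ and $T_{\bm e,\psi^T}\lceil_{D(T^{-1})}=T^{-1}$ is just an unwound version of the paper's argument via $T_{\varphi,\bm e}\subseteq T\Rightarrow T^\ast\subseteq T_{\bm e,\varphi}$. Your added domain bookkeeping (checking the four inclusions of Definition 3.1) is a welcome extra that the paper leaves implicit.
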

\begin{proof} It is clear that  $(\F_\varphi,\F_\psi^T)$ is a $(D(T^{\ast}),D(T^{-1}))$-quasi basis. Furthermore, since $Te_n=\varphi_n$, $n=0,1, \cdots$, we have
\be
T_{\varphi,\bm{e}} \subseteq T ,\nonumber
\en
which implies that
\be
T^{\ast} \subseteq T_{\bm{e},\varphi}. \nonumber
\en
Hence we have
\be
T^{\ast} =T_{\bm{e},\varphi}\lceil_{D(T^{\ast})}. \nonumber
\en
Thus we have
\be
(T^{\ast})^{-1}=\left( T_{\bm{e},\varphi}\lceil_{D(T^{\ast})} \right)^{-1}. \nonumber
\en
Since $(T^{-1})^\ast e_n=\psi_n^T$, $n=0,1, \cdots$, we can similarly show $T= \left(T_{\bm{e},\psi^T}\lceil_{D(T^{-1})} \right)^{-1}$.
This completes the proof.
\end{proof}
Next we consider when there exists a subspace $\D$ in $\Hil$ such that $(\F_\varphi,\F_\psi^T)$ is $\D$-quasi basis.\\
\par
\begin{prop} Let $\F_\varphi$ be a generalized Riesz system with constructing pair $(\F_e,T)$. Suppose that $\F_e \subset D(T^{\ast}T) \cap D(T^{-1}(T^{-1})^{\ast})$.
 Then $(\F_\varphi,\F_\psi^T)$ is a $(D(T^{\ast}) \cap D(T^{-1}))$-quasi basis and $T= \left( \overline{T_{\bm{e},\psi^T}\lceil_{D(T^{\ast}) \cap D(T^{-1})}} \right)^{-1}$, $(T^{-1})^{\ast}= \left( \overline{T_{\bm{e},\varphi}\lceil_{D(T^{\ast}) \cap D(T^{-1})}} \right)^{-1}$.
\end{prop}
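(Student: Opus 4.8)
The plan is to bootstrap from Proposition 3.5. That result already gives, for the present $(\F_e,T)$, that $(\F_\varphi,\F_{\psi^T})$ is a $(D(T^\ast),D(T^{-1}))$-quasi basis, together with the factorizations $T^\ast=T_{\bm e,\varphi}\lceil_{D(T^\ast)}$ and $T^{-1}=T_{\bm e,\psi^T}\lceil_{D(T^{-1})}$. Put $\D:=D(T^\ast)\cap D(T^{-1})$. Since $\D$ sits inside both $D(T^\ast)$ and $D(T^{-1})$, the summation identity of Proposition 3.5 restricts verbatim to all $x,y\in\D$, so the only thing left to verify for the "$\D$-quasi basis" claim in the sense of Definition 3.1 is the sandwich $D_\varphi\cup D_{\psi^T}\subseteq\D\subseteq D(\varphi)\cap D(\psi^T)$. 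The upper inclusion is immediate from the two factorizations and Lemma 2.2(2): $D(T^\ast)\subseteq D(T_{\bm e,\varphi})=D(\varphi)$ and $D(T^{-1})\subseteq D(T_{\bm e,\psi^T})=D(\psi^T)$. The lower inclusion is precisely where the hypothesis $\F_e\subset D(T^\ast T)\cap D(T^{-1}(T^{-1})^\ast)$ is consumed: $\varphi_n=Te_n\in D(T^\ast)$ because $e_n\in D(T^\ast T)$, while $\varphi_n\in\mathrm{Ran}(T)\subseteq D(T^{-1})$ for free, so $\varphi_n\in\D$; dually $\psi^T_n=(T^{-1})^\ast e_n\in D(T^{-1})$ because $e_n\in D(T^{-1}(T^{-1})^\ast)$, and $\psi^T_n=(T^\ast)^{-1}e_n\in\mathrm{Ran}((T^\ast)^{-1})=D(T^\ast)$ by Lemma 2.1(1), so $\psi^T_n\in\D$.

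Next I would reduce the two displayed operator identities to a single statement about cores. Since $\D\subseteq D(T^\ast)\cap D(T^{-1})$, the factorizations give $T_{\bm e,\varphi}\lceil_\D=T^\ast\lceil_\D$ and $T_{\bm e,\psi^T}\lceil_\D=T^{-1}\lceil_\D$, whence $\big(\overline{T_{\bm e,\psi^T}\lceil_\D}\big)^{-1}=T$ and $\big(\overline{T_{\bm e,\varphi}\lceil_\D}\big)^{-1}=(T^\ast)^{-1}=(T^{-1})^\ast$ (Lemma 2.1(1)) will follow the moment we know that $\D$ is a core for $T^{-1}$ and a core for $T^\ast$. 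This will also settle the density of $\D$ that Definition 3.1 requires, a core of a densely defined operator being automatically dense in $\Hil$.

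The core statement is the heart of the matter and the one step I expect to require genuine work: one cannot simply quote a "$D(A)\cap D(A^\ast)$ is a core" lemma, since here $D(T^\ast)=\mathrm{Ran}((T^{-1})^\ast)$ is a range, not an adjoint domain. I would argue via polar decomposition. Set $S:=(T^{-1})^\ast$; it is closed and densely defined, injective (its inverse is $T^\ast$, by Lemma 2.1(1)), and has dense range (since $\mathrm{Ran}(S)^\perp=\ker S^\ast=\ker T^{-1}=\{0\}$), so its polar decomposition reads $S=UB$ with $U$ unitary and $B:=(S^\ast S)^{1/2}$ self-adjoint, positive, injective. Then $T^{-1}=S^\ast=BU^\ast$ and $T^\ast=S^{-1}=B^{-1}U^\ast$, whence $\D=U\big(D(B)\cap D(B^{-1})\big)$. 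Conjugating by the unitary $U$ turns "$\D$ is a core for $T^{-1}$ (resp. for $T^\ast$)" into "$D(B)\cap D(B^{-1})$ is a core for $B$ (resp. for $B^{-1}$)", and this last fact is a short spectral-theorem computation: for $x\in D(B)$ (resp. $x\in D(B^{-1})$) the truncations $x_k:=E_B([1/k,k])x$ lie in $D(B)\cap D(B^{-1})$ and, since $E_B(\{0\})=0$, converge to $x$ in the graph norm of $B$ (resp. of $B^{-1}$).

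Assembling the pieces — Proposition 3.5 for the summation identity, the first paragraph for the inclusions $D_\varphi\cup D_{\psi^T}\subseteq\D\subseteq D(\varphi)\cap D(\psi^T)$, and the core statement of the last two paragraphs for the operator formulas and the density of $\D$ — completes the proof. Note that the hypothesis on $\F_e$ enters only in the first paragraph, to place $\F_\varphi\cup\F_{\psi^T}$ inside $\D$; the core statement itself holds for every generalized Riesz system.
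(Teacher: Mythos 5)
Your proof is correct, and its overall skeleton coincides with the paper's: spend the hypothesis $\F_e\subset D(T^{\ast}T)\cap D(T^{-1}(T^{-1})^{\ast})$ exactly on the lower inclusion $D_\varphi\cup D_{\psi^T}\subseteq \D:=D(T^{\ast})\cap D(T^{-1})$, get the upper inclusion and the summation identity from the factorizations $T^{\ast}=T_{\bm{e},\varphi}\lceil_{D(T^{\ast})}$ and $T^{-1}=T_{\bm{e},\psi^T}\lceil_{D(T^{-1})}$ (your appeal to the preceding proposition --- Proposition 3.4 in the paper's numbering, which you cite as 3.5 --- is legitimate and spares you computations the paper redoes inline), and reduce the two operator formulas to the claim that $\D$ is a core for both $T^{-1}$ and $T^{\ast}$. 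Where you genuinely diverge is in the proof of that core claim, which is indeed the heart of the matter. The paper runs two separate spectral-truncation arguments: for $T^{-1}$ it truncates $x\in D(T)$ by the spectral projections of $|T|$ and pushes forward by $T$, checking by hand that $TE_T(n)x\in\D$; for $T^{\ast}$ it uses the spectral projections of $|T^{\ast}|$ together with the identity $E_{T^{\ast}}(n)y=T(T^{\ast}|T^{\ast}|^{-2}E_{T^{\ast}}(n)y)$ to land in $\D$. You instead take the polar decomposition $(T^{-1})^{\ast}=UB$ once, note that $U$ is unitary because $(T^{-1})^{\ast}$ is injective with dense range, derive $T^{-1}=BU^{\ast}$, $T^{\ast}=B^{-1}U^{\ast}$, $\D=U\left(D(B)\cap D(B^{-1})\right)$, and conjugate both core statements into the single transparent fact that $D(B)\cap D(B^{-1})$ is a core for $B$ and for $B^{-1}$ when $B$ is positive, self-adjoint and injective. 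This buys a unified and arguably cleaner treatment of the same underlying idea (spectral truncation away from $0$ and $\infty$), at the modest cost of the bookkeeping identities for the polar decomposition, all of which you justify correctly.
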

\begin{proof} We denote for simplicity $\psi^T$ by $\psi$. At first, we show that $D(T^{-1}) \cap D(T^{\ast})$ is a core for $T^{-1}$. Take an arbitrary $x\in D(T)$. Let $|T|=\int_0^\infty \lambda d E_T(\lambda)$ be the spectral resolution of the  absolute $|T| := (T^{\ast}T)^{1/2}$ of $T$. Then we have $TE_T(n) x \in D(T^{\ast}) \cap D(T^{-1})$, $n=0,1, \cdots$ and $\lim_{n \rightarrow \infty} TE_T(n) x=Tx$. Furthermore, take an arbitrary $y\in D(T^{-1})$. Then $y=Tx$ for some $x\in D(T)$ and we have $\lim_{n \rightarrow \infty} TE_T(n) x=Tx=y$ and $\lim_{n \rightarrow \infty}T^{-1}(TE_T(n) x)=\lim_{n \rightarrow \infty}E_T(n)x=x=T^{-1}y.$ Thus $D(T^{-1}) \cap D(T^{\ast})$ is a core for $T^{-1}$.

At second, we show that $D(T^{-1}) \cap D(T^{\ast})$ is a core for $T^{\ast}$. Take an arbitrary $y\in D(T^{\ast})$. Let $|T^{\ast}|=\int_0^\infty \lambda d E_{T^{\ast}}(\lambda)$ be the spectral resolution of the  absolute $|T^{\ast}| := (TT^{\ast})^{1/2}$ of $T^{\ast}$. Then it follows that $E_{T^\ast}(n)y= T(T^\ast|T^\ast|^{-2} E_{T^\ast}(n)y) \in D(T^{-1}) \cap D(T^\ast)$, $n=0,1, \cdots$, $\lim_{n \rightarrow \infty} E_{T^{\ast}}(n) y =y$ and $\lim_{n \rightarrow \infty} T^{\ast}E_{T^{\ast}}(n) y=T^{\ast}y$. Thus $D(T^{-1}) \cap D(T^{\ast})$ is a core for $T^{\ast}$.

At third, we show that $D_\varphi \subseteq D(T^{-1})\cap D(T^{\ast}) \subseteq D(\varphi) \cap D(\psi)$ and $D_\psi \subseteq D(T^{-1})\cap D(T^{\ast}) \subseteq D(\varphi ) \cap D(\psi)$. It is clear that $\varphi_n=Te_n \in D(T^{-1})$. Furthermore, since $\F_{e} \subseteq D(T^\ast T)$, we have
$$
\ip{Tx}{\varphi_n}
=\ip{Tx}{Te_n} \nonumber
= \ip{x}{T^{\ast}Te_n} $$
for all $x\in D(T)$. Hence we have $\varphi_n \in D(T^{\ast})$. Thus $D_\varphi \subseteq D(T^{-1}) \cap D(T^{\ast})$. And since $\psi_n=(T^{-1})^{\ast} e_n (=(T^{\ast})^{-1}e_n)$, we have $\psi_n\in D(T^{\ast})$. Furthermore, since $\F_{e} \subseteq D(T^{-1}(T^{-1})^\ast)$, we have
$$
\ip{(T^{-1})^{\ast}y}{ \psi_n}
= \ip{(T^{-1})^{\ast} y}{ (T^{-1})^{\ast}e_n}
= \ip{y}{ T^{-1}(T^{-1})^{\ast} e_n} $$
for all $y\in D((T^{-1})^{\ast})$. Hence we have $\psi_n \in D(T^{-1})$. Thus $D_\psi \subseteq D(T^{-1}) \cap D(T^{\ast})$. We show $D(T^{-1}) \cap D(T^\ast) \subseteq D(\varphi) \cap D(\psi)$. Indeed, take an arbitrary $y\in D(T^{-1}) \cap D(T^{\ast})$. Since
$$
\sum_{k=0}^{\infty} |\ip{y}{\varphi_k}|^2
= \sum_{k=0}^\infty |\ip{y}{Te_k}|^2
= \sum_{k=0}^\infty |\ip{T^{\ast}y}{e_k}|^2
= \| T^{\ast}y\|^2
$$
and
$$
\sum_{k=0}^{\infty} |\ip{y}{\psi_k}|^2
= \sum_{k=0}^\infty |\ip{T^{-1}y}{e_k}|^2
= \| T^{-1}y\|^2, $$
 we have $y\in D(\varphi) \cap D(\psi)$.

Finally, we show that $(\F_\varphi,\F_\psi^T)$ is a $(D(T^{\ast})\cap D(T^{-1}))$-quasi basis and $T= \left( \overline{T_{\bm{e},\psi}\lceil_{D(T^{\ast}) \cap D(T^{-1})}} \right)^{-1}$, $(T^{-1})^{\ast}= \left( \overline{T_{\bm{e},\varphi}\lceil_{D(T^{\ast}) \cap D(T^{-1})}} \right)^{-1}$. Since
\begin{eqnarray}
\sum_{k=0}^\infty \ip{x}{ \varphi_k}\ip{\psi_k}{y}
&=& \sum_{k=0}^\infty \ip{x}{Te_k}\ip{(T^{-1})^{\ast}e_k}{y} \nonumber \\
&=& \sum_{k=0}^\infty \ip{T^{\ast} x}{ e_k}\ip{e_k}{T^{-1} y} \nonumber \\
&=& \ip{T^{\ast}x}{T^{-1}y} \nonumber \\
&=& \ip{x}{y} \nonumber
\end{eqnarray}
for all $x,y \in D(T^{\ast}) \cap D(T^{-1})$, it follows that
$(\F_\varphi,\F_\psi^T)$ is a $(D(T^{\ast})\cap D(T^{-1}))$-quasi basis. Furthermore since $T^{-1} \subseteq T_{\bm{e},\psi}$ and  $D(T^{-1}) \cap D(T^{\ast})$ is a core for $T^{-1}$, we have
\be
T^{-1}=\overline{T^{-1}\lceil_{D(T^{\ast}) \cap D(T^{-1})}}= \overline{T_{\bm{e},\psi}\lceil_{D(T^{\ast}) \cap D(T^{-1})}}, \nonumber
\en
which implies that $T=(\overline{T_{\bm{e},\psi}\lceil_{D(T^{\ast}) \cap D(T^{-1})}})^{-1}$. Furthermore since $T_{\varphi,\bm{e}} \subseteq T$ and $D(T^{-1}) \cap D(T^{\ast})$ is a core for $T^{\ast}$, we have
\be
T^{\ast}= \overline{T^{\ast} \lceil_{D(T^{\ast}) \cap D(T^{-1})}}=\overline{T_{\bm{e},\varphi}\lceil_{D(T^{\ast}) \cap D(T^{-1})}}, \nonumber
\en
which implies that $(T^{\ast})^{-1}=(\overline{T_{\bm{e},\varphi}\lceil_{D(T^{\ast}) \cap D(T^{-1})}})^{-1}$. This completes the proof.
\end{proof}

\section{Physical operators constructed from $(\D,\E)$-quasi bases}
In this section, extending what was discussed recently for instance in \cite{bit2013,hiro1,bb2017},  we investigate some physical operators constructed from $(\D,\E)$-quasi bases. Let $(\F_\varphi,\F_\psi)$ be a $(\D,\E)$-quasi basis. As shown in Theorem 3.2, $F_\varphi$ is a generalized Riesz system with constructing pairs $(\F_e, (\overline{T_{\bm{e},\psi} \lceil_\E})^{-1})$ and $(\F_e,(T_{\bm{e},\varphi} \lceil_\D)^\ast)$ for any ONB $\F_e =\{ e_n \}$ such that $(\overline{T_{\bm{e},\psi} \lceil_\E})^{-1} \subseteq (T_{\bm{e},\varphi} \lceil_\D)^\ast$, and $\{ \psi_n \}$ is a generalized Riesz system with constructing pairs $(\F_e,(\overline{T_{\bm{e},\varphi}} \lceil_\D)^{-1})$ and $(\F_e,(T_{\bm{e},\psi\lceil_\D})^\ast)$ such that $(\overline{T_{\bm{e},\varphi} \lceil_\D})^{-1} \subseteq (T_{\bm{e},\psi}\lceil_\E)^\ast$. Here we put, to keep the notation simple,
\begin{eqnarray}
T&=& (\overline{T_{\bm{e},\psi} \lceil_\E})^{-1} \;\;\; {\rm or} \;\;\; (T_{\bm{e},\varphi} \lceil_\D)^\ast, \nonumber \\
K&=& (\overline{T_{\bm{e},\varphi} \lceil_\D})^{-1} \;\;\; {\rm or} \;\;\; (T_{\bm{e},\psi}\lceil_\E)^\ast . \nonumber
\end{eqnarray}
For a generalized Riesz system $\F_\varphi$ with constructing pair $(\F_e,T)$ we can define a non-self-adjoint Hamiltonian $H_\varphi^{\bm{\alpha}} := TH_{\bm{e}}^{\bm{\alpha}}T^{-1}$, a generalized lowering operator $A_\varphi^{\bm{\alpha}} := TA_{\bm{e}}^{\bm{\alpha}}T^{-1}$ and a generalized raising operator $B_\varphi^{\bm{\alpha}} := TB_{\bm{e}}^{\bm{\alpha}}T^{-1}$. Similarly, for a generalized Riesz system $\{ \psi_n \}$ with a constructing pair $(\F_e,K)$ we define a non-self-adjoint Hamiltonian $H_\psi^{\bm{\alpha}} := KH_{\bm{e}}^{\bm{\alpha}}K^{-1}$, a generalized lowering operator $A_\psi^{\bm{\alpha}} := KA_{\bm{e}}^{\bm{\alpha}}K^{-1}$ and a generalized raising operator $B_\psi^{\bm{\alpha}} := KB_{\bm{e}}^{\bm{\alpha}}K^{-1}$. But we don't know whether these operators are even densely defined or not.
Suppose that $D_\varphi$ is dense in $\Hil$. Then, since $H_{\varphi}^{\bm{\alpha}} \varphi_n
= \alpha_n \varphi_n$, $A_{\varphi}^{\bm{\alpha}} \varphi_n = \alpha_n \varphi_{n-1}$ $(0 \;{\rm if}\; n=0)$ and
$B_{\varphi}^{\bm{\alpha}} \varphi_n= \alpha_{n+1} \varphi_{n+1}$,
it is clear that $H_{\varphi}^{\bm{\alpha}}$, $A_{\varphi}^{\bm{\alpha}}$ and $B_{\varphi}^{\bm{\alpha}}$ are densely defined, but since $D_\psi$ is not necessarily dense in $\Hil$, the operators  $H_{\psi}^{\bm{\alpha}}$, $A_{\psi}^{\bm{\alpha}}$ and $B_{\psi}^{\bm{\alpha}}$ need not being densely defined.
 Therefore,  we first investigate when $\D_\varphi$ or $\D_\psi$ are dense in $\Hil$ under the assumption that $(\F_\varphi ,\F_\psi)$ is a $(\D,\E)$-quasi basis.\\

{Before going forth, we shortly discuss an example which is the leading model for the objects we are dealing with and which allows an explicit computation of all involved operators.
 	
 	\vspace{2mm}
 	
 	{\bf Example 3:}-- Let $H_0=p^2+x^2$ be  the self-adjoint Hamiltonian introduced in Example 2 above, and let $T$ be the following multiplication operator: $(Tf)(x)=(1+x^2)f(x)$, for all functions $f(x)\in D(T)=\{g(x)\in\Lc^2(\mathbb{R}): \, (1+x^2)g(x)\in\Lc^2(\mathbb{R})\}$. $T$ is an unbounded self-adjoint operator, invertible with bounded inverse $T^{-1}$.

 As seen in \eqref{eqn_Hzero}, $H_0$ has the form $H_{\bm{e}}^{\bm{\alpha}}$ where ${\bm{\alpha}}= \{2n+1,\, n\in {\mb N}\}$ and $\{e_n\}$ is the orthonormal basis constructed from the Hermite polynomials. To simplify notations, we will omit here explicit reference to ${\bm{\alpha}}$.

  If we identify $K$ with $T^{-1}$, 
  straightforward computations show that
 	$$
 	H_\varphi=p^2+V_\varphi(x)+\frac{4ix}{1+x^2}\,p, \qquad H_\psi=p^2+V_\psi(x)-\frac{4ix}{1+x^2}\,p,
 	$$
 	where $V_\varphi(x)=x^2+2\frac{(1-3x^2)}{(1+x^2)^2}$ and  $V_\psi(x)=x^2-\frac{2}{1+x^2}$. Notice that, because of the relation between $T$ and $K$, $H_\varphi=H_\psi^*$, even if this is not evident from our explicit formulas. From a physical point of view both $H_\varphi$ and $H_\psi$ can be seen as a modified version of the harmonic oscillator where an extra potential is added, going to zero as $x^{-2}$, and the manifestly non self-adjoint terms $\pm \frac{4ix}{1+x^2}\,p$ appear. These Hamiltonians can be factorized as follows: $H_\varphi=2B_\varphi A_\varphi+\1$ and $H_\psi=2B_\psi A_\psi+\1$, where
 	$$
 	A_\varphi=\frac{1}{\sqrt{2}}\left(x-\frac{2x}{1+x^2}+ip\right), \quad B_\varphi=\frac{1}{\sqrt{2}}\left(x+\frac{2x}{1+x^2}-ip\right),
 	$$
 	while
 	$$
 	A_\psi=\frac{1}{\sqrt{2}}\left(x+\frac{2x}{1+x^2}+ip\right), \quad B_\psi=\frac{1}{\sqrt{2}}\left(x-\frac{2x}{1+x^2}-ip\right).
 	$$
 All these operators formally collapse to $c=\frac{1}{\sqrt{2}}\left(x+ip\right)$ or to $c^\dagger=\frac{1}{\sqrt{2}}\left(x-ip\right)$ for large $x$. It is also interesting to observe that $B_\varphi=A_\psi^*$ and $A_\varphi=B_\psi^*$
 	
 	The two vacua of $A_\varphi$ and $A_\psi$, corresponding to the lower eigenvectors of $H_\varphi$ and $H_\psi$ respectively, can be easily obtained by solving the differential equations $A_\varphi\varphi_0(x)=0$ and $A_\psi\psi_0(x)=0$. The solutions we find in this way coincide with those we find introducing
 	$$
 	\varphi_n(x)=(Te_n)(x)=\frac{1}{\sqrt{2^n\,n!\,\pi^{1/2}}}(1+x^2)H_n(x)e^{-x^2/2},
 	$$
 	and
 	$$
 	\varphi_n(x)=(Ke_n)(x)=\frac{1}{\sqrt{2^n\,n!\,\pi^{1/2}}}\frac{H_n(x)}{1+x^2}\,e^{-x^2/2},
 	$$
 	see Example 2. Incidentally, it is clear that $e_n(x)\in D(T)$. Of course, $e_n(x)\in D(K)$ since $D(K)=\Lc^2(\mathbb{R})$.
 	
 	The last point we want to consider here concerns the density of  $\D_\varphi$ and $\D_\psi$ in $\Lc^2(\mathbb{R})$. More concretely, we will check that $\F_\varphi$ is total in $D(T)$ and that  $\F_\psi$ is total in $D(K)=\Lc^2(\mathbb{R})$. In fact, let $f(x)\in D(T)$ be such that $\left<f,\varphi_n\right>=0$ for all $n$. Hence $0= \left<f,\varphi_n\right>= \left<Tf,e_n\right>$, so that $Tf=0$ and, since $Tf\in D(K)$, $f(x)=0$ a.e. in $\mathbb{R}$. Similarly we can prove that, if $g(x)\in\Lc^2(\mathbb{R})$ is such that $\left<g,\psi_n\right>=0$ for all $n$, then $g(x)=0$ a.e. in $\mathbb{R}$.
 	
 	\medskip
 We come now back to investigate more general situations.

 }

\par
{\bf Proposition 4.1.} {\it Suppose that $(\F_\varphi ,\F_\psi)$ is a $(\D,\E)$-quasi basis. Then, we have the following statements.
\par
(1) $D_{\varphi}^\perp \subseteq D(\varphi)$, where $D_\varphi^\perp$ is an orthogonal complement of $D_\varphi$ in $\Hil$.
\par
(2) If $\D \cap D_{\varphi}^\perp$ is dense in $D_{\varphi}^\perp$, then $D_\varphi$ is dense in $\Hil$.\\
Similar results hold for $\F_\psi$.}\\
\par

\begin{proof}
(1) For $x\in D_\varphi^\perp$, we have
\begin{eqnarray}
\ip{T_{\varphi,\bm{e}}e_n}{x}
=\ip{\varphi_n}{x}=0 \nonumber
\end{eqnarray}
for any ONB $\F_{\bm{e}}$ in $\Hil$ and $n=0,1, \cdots$. Since $\F_e$ is a core for $\bar{T}_{\varphi,\bm{e}}$ by Lemma 2.2, we have $x\in D(T_{\varphi, \bm{e}}^\ast)=D(T_{\bm{e},\varphi})=D(\varphi)$.
\par
(2) For any $x\in D_\varphi^\perp$, there exists a sequence $\{ x_n \} \subseteq \D \cap D_\varphi^\perp$ such that $\lim_{n \rightarrow \infty} x_n =x$. Since $(\F_\varphi ,\F_\psi)$ is a $(\D,\E)$-quasi basis, we have
\begin{eqnarray}
\ip{x}{y}
&=& \lim_{n \rightarrow \infty} \ip{x_n}{y} \nonumber \\
&=& \lim_{n \rightarrow \infty} \sum_{k=0}^\infty \ip{x_n}{\varphi_k}\ip{\psi_k}{y}=0 \nonumber
\end{eqnarray}
for all $y\in \E$. Hence we have $x=0$. Thus $D_\varphi$ is dense in $\Hil$.
\end{proof}

{\bf Proposition 4.2.} {\it Let $( \F_\varphi,\F_\psi)$ be a biorthogonal pair such that $D(\varphi)$ and $D(\psi)$ are dense in $\Hil$. Then we have the following
\par
(1) $( \F_\varphi,\F_\psi)$ is a $(D(\varphi),\E)$-quasi basis for some dense subspace $\E$ in $\Hil$ such that $D_\varphi \subseteq \E \subseteq D(\psi)$ if and only if $D_\varphi$ is dense in $\Hil$. If this is true, $(\F_\varphi,\F_\psi)$ is a $(D(\varphi),D_\varphi)$-quasi basis.
\par
(2) $(\F_\varphi,\F_\psi)$ is a $(\D,D(\psi))$-quasi basis for some dense subspace $\D$ in $\Hil$ such that $D_\psi \subseteq \D \subseteq D(\varphi)$ if and  only if $D_\psi$ is dense in $\Hil$. If this is true, $(\F_\varphi,\F_\psi)$ is a $(D_\psi,D(\psi))$-quasi basis.}


\begin{proof}  (1) Suppose that $(\F_\varphi,\F_\psi)$ is a $(D(\varphi),\E)$-quasi basis for some dense subspace $\E$ in $\Hil$ such that $D_\varphi \subseteq \E \subseteq D(\psi)$. Take an arbitrary
$x\in D_\varphi^\perp$. By Proposition 4.1, (1) we have $x\in D(\varphi)$. Since $( \{ \varphi_n \}, \{ \psi_n \})$ is a $(D(\varphi),\E)$-quasi basis, we have
\begin{eqnarray}
\ip{x}{y}
= \sum_{k=0}^\infty \ip{x}{\varphi_k}\ip{\psi_k}{y}=0 \nonumber
\end{eqnarray}
for all $y\in \E$, which implies that $x=0$. Hence $D_\varphi$ is dense in $\Hil$. \\
Conversely suppose that $\D_\varphi$ is dense in $\Hil$.
Then we show that $(\F_\varphi,\F_\psi)$ is a $(D(\varphi),D_\varphi)$-quasi basis. Indeed, take arbitrary $x\in D(\varphi)$ and $y\in D_\varphi$. Then, $y= \sum_{j=0}^n \alpha_j \varphi_j$ for some $\alpha_j \in \C$, $j=0,1, \cdots , n$, and we have
\begin{eqnarray}
\sum_{k=0}^\infty \ip{x}{\varphi_k}\ip{\psi_k}{y}
&=& \sum_{k=0}^\infty \ip{x}{T_{\varphi,\bm{e}}e_k}\ip{T_{\psi,\bm{e}}e_k}{y} \nonumber \\
&=& \ip{T_{\bm{e},\varphi}x}{ T_{\bm{e},\psi}y} \nonumber \\
&=& \sum_{j=0}^n \bar{\alpha}_j \ip{T_{\bm{e},\varphi}x}{ T_{\bm{e},\psi}\varphi_j} \nonumber \\
&=& \sum_{j=0}^n \bar{\alpha}_j \ip{x}{ T_{\varphi,\bm{e}}e_j} \nonumber \\
&=& \ip{x}{\sum_{j=0}^n \alpha_j \varphi_j} \nonumber \\
&=& \ip{x}{y} . \nonumber
\end{eqnarray}
\par
(2) This is shown similarly to (1).

\end{proof}

Suppose that $(\F_\varphi ,\F_\psi)$ is a $(\D,\E)$-quasi basis. Let $\bm{r} := \{ r_n \} \subset \R$; $1 \leq r_n$, $n=0,1, \cdots$ and we put
\begin{eqnarray}
\varphi_r &:=& \{ r_n \varphi_n \} , \nonumber \\
\psi_{\frac{1}{r}}
&:=& \left\{ \frac{1}{r_{n}} \psi_n \right\} . \nonumber
\end{eqnarray}
Then, $(\varphi_r ,\psi_{\frac{1}{r}})$ is a biorthogonal pair satisfying
\begin{eqnarray}
D_{\psi_r}&=&D_\psi \subseteq D(\varphi_r)  \subseteq D(\varphi), \nonumber \\
D_{\varphi_r} &=&D_\varphi \subseteq \E \subseteq D(\psi) \subseteq D(\psi_{\frac{1}{r}}) , \nonumber
\end{eqnarray}
where $$D(\varphi_r) := \left\{ x \in \Hil ; \sum_{k=0}^\infty r_k^2 |\ip{x}{\varphi_k}|^2<\infty
\right\} \mbox{\; and \;} D(\psi_{\frac{1}{r}}) := \left\{ x \in \Hil ; \sum_{k=0}^\infty \frac{1}{r_k^2} |\ip{x}{\psi_k}|^2<\infty \right\}.$$\\
Then we have the following \\
\par
{\bf Proposition 4.3.} {\it Suppose that $(\F_\varphi,\F_\psi)$ is a $(\D,\E)$-quasi basis and
there exists a sequence $\bm{r} := \{ r_n \} \subset \R$ such that $1\leq r_n$, $n=0,1, \cdots$ and $D(\varphi_r) \subseteq \D$ and $D(\varphi_r)$ is dense in $\Hil$. Then, $D_\varphi$ is dense in $\Hil$ and $(\F_\varphi,\F_\psi)$ is a $(D(\varphi),D_\varphi)$-quasi basis.}\\
\par
\begin{proof} Since $D(\varphi_r) \subseteq \D$, it follows that $(\varphi_r ,\psi_{\frac{1}{r}})$ is a $(D(\varphi_r),\E)$-quasi basis, which implies by Proposition 4.2 that $D_{\varphi_r}=D_\varphi$ is dense in $\Hil$.
\end{proof}

We next consider the case that $D_\varphi$ and $D_\psi$ are not necessarily dense in $\Hil$.\\
\par
{\bf Proposition 4.4.} {\it Suppose that $(\F_\varphi,\F_\psi)$ is a $(\D,\E)$-quasi basis. Then there exists an ONB $\F_f := \{ f_n \}$ in $\Hil$ such that $\overline{T_{\bm{f},\varphi} \lceil_\D}$ is a positive self-adjoint operator in $\Hil$ and $(\F_{\bm{f}},\overline{T_{\bm{f},\varphi} \lceil_\D})$ is a constructing pair for the generalized Riesz system $\F_\varphi$. Furthermore, $(\F_{\bm{f}}, (\overline{T_{\bm{f},\varphi} \lceil_\D})^{-1})$ is a constructing pair for the generalized Riesz system $\F_\psi$.\\
}
\par
\begin{proof} By Theorem 3.2, $(\overline{T_{\bm{e},\varphi} \lceil_\D})^\ast$ is a constructing operator for the generalized Riesz system $\F_\varphi$ and any ONB $\F_e = \{ e_n \}$ in $\Hil$. Let $\overline{T_{\bm{e},\varphi} \lceil_\D}=U|\overline{T_{\bm{e},\varphi} \lceil_\D}|$ be the polar decomposition of $\overline{T_{\bm{e},\varphi} \lceil_\D}$. Since $\overline{T_{\bm{e},\varphi} \lceil_\D}$ has a densely defined inverse, $U$ is a unitary operator on $\Hil$. Here we put $f_n =U^\ast e_n$, $n=0,1, \cdots$. Then it follows that $\{ f_n \}$ is an ONB in $\Hil$ and
\begin{eqnarray}
|\overline{T_{\bm{e},\varphi} \lceil_\D}| f_n
= |\overline{T_{\bm{e},\varphi} \lceil_\D}|U^\ast e_n
=(T_{\bm{e},\varphi} \lceil_\D)^\ast e_n
=\varphi_n , \;\;\; n=0,1, \cdots , \nonumber
\end{eqnarray}
which implies that $(\F_{\bm{f}}, |\overline{T_{\bm{e},\varphi} \lceil_\D}|)$ is a constructing pair for $\F_\varphi$. Hence,
\begin{eqnarray}
T_{\varphi,\bm{f}} \subseteq |\overline{T_{\bm{e},\varphi} \lceil_\D}|
\subseteq T_{\bm{f},\varphi} , \nonumber
\end{eqnarray}
and so $\overline{T_{\bm{f},\varphi} \lceil_\D} =|\overline{T_{\bm{e},\varphi} \lceil_\D}|$. This completes the proof.
\end{proof}

\par
Similarly we have the following\\
\par
{\bf Proposition 4.5.} {\it Suppose that $(\F_\varphi,\F_\psi)$ is a $(\D,\E)$-quasi basis. Then there exists an ONB $\F_g := \{ g_n \}$ in $\Hil$ such that $\overline{T_{\bm{g},\psi} \lceil_\E}$ is a positive self-adjoint operator in $\Hil$ and $(\F_{\bm{g}},\overline{T_{\bm{g},\psi} \lceil_\E})$ is a constructing pair for the generalized Riesz system $\F_\psi$. Furthermore, $(\F_{\bm{g}}, (\overline{T_{\bm{g},\psi} \lceil_\E})^{-1})$ is a constructing pair for the generalized Riesz system $\F_\varphi$.\\}
\par
We now consider a CCR-algebra-like structure for non-self-adjoint Hamiltonians, generalized lowering and raising operators by taking a good domain for their operators. For that the notion of unbounded operator algebras is relevant, \cite{schm, bagrus, ct_heisen}. Let $\D$ be a dense subspace in a Hilbert space $\Hil$. We denote by $\Lc(\D)$ the set of all linear operators from $\D$ to $\D$. Then $\Lc (\D)$ is an algebra equipped with the usual operations: $X+Y$, $\alpha X$ and $XY$.\\
\par

{\bf Theorem 4.6.} {\it Suppose that $(\F_\varphi,\F_\psi)$ is a $(\D,\E)$-quasi basis, and $\F_{\bm{f}} = \{ f_n \}$ and $\F_{\bm{g}}= \{ g_n \}$ in Proposition 4.4 and Proposition 4.5. Here we denote by $T_\varphi$ the constructing operator $\overline{T_{\bm{f},\varphi}\lceil_\D }$ of $\F_\varphi$ and $T_\psi$ the constructing operator $\overline{T_{\bm{g},\psi}\lceil_\E }$ of $\F_\psi$. Then we have the following
\par
(1) If $H_{\bm{f}}^{\bm{\alpha}} \D \subseteq \D$ for some ${\bm{\alpha}} = \{ \alpha_n \} \subset \C$, then the linear span of $T_\varphi \D$ is dense in $\Hil$ and the non-self-adjoint Hamiltonian $T_\varphi H_{\bm{f}}^{\bm{\alpha}}T_\varphi^{-1}$ for $\F_\varphi$ is contained in ${\cal L}(T_\varphi \D)$.
\par
(2) If $H_{\bm{g}}^{\bm{\alpha}} \E \subseteq \E$ for some ${\bm{\alpha}} = \{ \alpha_n \} \subset \C$, then the linear span of $T_\psi \E$ is dense in $\Hil$ and the non-self-adjoint Hamiltonian $T_\psi^{-1} H_{\bm{g}}^{\bm{\alpha}}T_\psi$ for $\F_\psi$ is contained in ${\cal L}(T_\psi\E)$.\\
Here $H_{\bm{f}}^{\bm{\alpha}}$ and $H_{\bm{g}}^{\bm{\alpha}}$ are the standard Hamiltonians for the ONB $\F_{\bm{f}}$ and $\F_{\bm{g}}$, respectively.\\}
\par
\begin{proof} (1) Since $\D$ is a core for $T_\varphi$ and $T_\varphi$ has the inverse, $T_\varphi \D$ is dense in $\Hil$. By assumption, it is clear that $T_\varphi H_{\bm{f}}^{\bm{\alpha}} T_\varphi^{-1} \in \Lc (T_\varphi \D)$.
\par
(2) This is shown similarly to (1).
\end{proof}
\par
Next, to consider the generalized lowering and raising operators defined by $(\D,\E)$-quasi bases, we assume that
\begin{eqnarray}
0 \leq \alpha_0 < \alpha_n <\alpha_{n+1} \;\;\; {\rm and } \;\;\; \alpha_{n+1} \leq \alpha_n +r , \;\; n=1, \cdots, \;\;\; {\rm for \; some} \; r>0.
\end{eqnarray}
Then we have the following\\
\par
{\bf Theorem 4.7.} {\it Suppose that $(\F_\varphi,\F_\psi)$ is a $(\D,\E)$-quasi basis, and $T_\varphi$, $T_\psi$, $\F_{\bm{f}} = \{ f_n \}$ and $\F_{\bm{g}}= \{ g_n \}$ as in Theorem 4.6. Then we have the following statements.
\par
(1) Suppose that $D^\infty(H_{\bm{f}}^{\bm{\alpha}})  := \cap_{n \in N} D((H_{\bm{f}}^{\bm{\alpha}})^n) \subseteq \D$ and $T_{\bm{f},\varphi}D^\infty (H_{\bm{f}}^{\bm{\alpha}})$ is dense in $\Hil$. Then $(\F_{\bm{f}}, T_\varphi^0 :=\overline{T_{\bm{f},\varphi} \lceil_{D^\infty (H_{\bm{f}}^{\bm{\alpha}})}})$ is a constructing pair for $\F_\varphi$ and the non-self-adjoint Hamiltonian $H_\varphi^0 := T_\varphi^0 H_{\bm{f}}^{\bm{\alpha}} (T_\varphi^0)^{-1}$ for $\F_\varphi$, the generalized lowering operator $A_\varphi^0 := T_\varphi^0 A_{\bm{f}}^{\bm{\alpha}} (T_\varphi^0)^{-1}$ for $\F_\varphi$ and the generalized raising operator $B_\varphi^0 := T_\varphi^0 B_{\bm{f}}^{\bm{\alpha}} (T_\varphi^0)^{-1}$ for $\F_\varphi$ are contained in $\Lc ( T_\varphi^0 D^\infty (H_{\bm{f}}^{\bm{\alpha}}))$.
\par
(2) Suppose that $D^\infty(H_{\bm{g}}^{\bm{\alpha}})   \subseteq \E$ and $T_{\bm{g},\psi}D^\infty (H_{\bm{g}}^{\bm{\alpha}})$ is dense in $\Hil$. Then $(\F_{\bm{g}}, T_\psi^0 :=\overline{T_{\bm{g},\psi} \lceil_{D^\infty (H_{\bm{g}}^{\bm{\alpha}})}})$ is a constructing pair for $\F_\psi$ and the non-self-adjoint Hamiltonian $H_\psi^0 := T_\psi^0 H_{\bm{g}}^{\bm{\alpha}} (T_\psi^0)^{-1}$ for $\F_\psi$, the generalized lowering operator $A_\psi^0 := T_\psi^0 A_{\bm{g}}^{\bm{\alpha}} (T_\psi^0)^{-1}$ for $\F_\psi$ and the generalized raising operator $B_\psi^0 := T_\psi^0 B_{\bm{g}}^{\bm{\alpha}} (T_\psi^0)^{-1}$ for $\F_\psi$ are contained in $\Lc ( T_\psi^0 D^\infty (H_{\bm{g}}^{\bm{\alpha}}))$.\\
}
\par
\begin{proof} At first, we show that $(\F_{\bm{f}},T_\varphi^0)$ is a constructing pair for $\F_\varphi$. Since $D(T_\varphi^0) \supseteq D^\infty (H_{\bm{f}}^{\bm{\alpha}}) \supseteq \F_{\bm{f}}$, $T_\varphi^0$ is a densely defined closed operator in $\Hil$. Furthermore, since $T_\varphi^0 \subseteq T_\varphi = \overline{T_{\bm{f},\varphi}\lceil_\D}$ and $T_\varphi$ has the inverse, $T_\varphi^0$ has the inverse. By assumption, we have
\begin{eqnarray}
D((T_\varphi^0)^{-1})
\supseteq T_\varphi^0 D(T_\varphi^0)
\supseteq T_\varphi^0 D^\infty (H_{\bm{f}}^{\bm{\alpha}})
=T_{\bm{f},\varphi } D^\infty ( H_{\bm{f}}^{\bm{\alpha}}), \nonumber
\end{eqnarray}
which implies that $T_\varphi^0$ has a densely defined inverse. Furthermore, we have the following
\begin{eqnarray}
T_\varphi^0 f_n
=T_\varphi f_n
=\varphi_n , \;\;\; n=0,1, \cdots . \nonumber
\end{eqnarray}
Hence we have $(\F_\varphi,T_\varphi^0)$ is a constructing pair for $\F_\varphi$.
\par
Next we consider the non-self-adjoint Hamiltonian $H_\varphi^0$ for $\F_\varphi$, the generalized lowering operator $A_\varphi^0 $ for $\F_\varphi$ and the generalized raising operator for $B_\varphi^0 $ for $\F_\varphi$. Since we have
\begin{eqnarray}
(H_{\bm{f}}^{\bm{\alpha}})^n x
&=& \sum_{k=0}^\infty \alpha_k^n \ip{x}{f_k}f_k , \;\;\; x\in D((H_{\bm{f}}^{\bm{\alpha}})^n) , \nonumber \\
(A_{\bm{f}}^{\bm{\alpha}})^n x
&=& \sum_{k=0}^\infty \alpha_{k+1} \alpha_{k+2} \cdots \alpha_{k+n} \ip{x}{f_{k+1}}f_k , \;\;\; x\in D((A_{\bm{f}}^{\bm{\alpha}})^n), \nonumber \\
(B_{\bm{f}}^{\bm{\alpha}})^n x
&=& \sum_{k=0}^\infty \alpha_{k+1}\alpha_{k+2} \cdots \alpha_{k+n} \ip{x}{f_k}f_{k+1}, \;\;\; x\in D((B_{\bm{f}}^{\bm{\alpha}})^n), \nonumber
\end{eqnarray}
it follows that
\begin{eqnarray}
x\in D((H_{\bm{f}}^{\bm{\alpha}})^n)
&{\rm iff}& \sum_{k=0}^\infty \alpha_k^{2n} |\ip{x}{f_k}|^2 <\infty , \nonumber \\
x\in D((B_{\bm{f}}^{\bm{\alpha}})^n)
&{\rm iff}& \sum_{k=0}^\infty (\alpha_{k+1} \cdots \alpha_{k+n})^{2} |\ip{x}{f_{k+1}}|^2 < \infty , \nonumber \\
x\in D((B_{\bm{f}}^{\bm{\alpha}})^n)
&{\rm iff}& \sum_{k=0}^\infty (\alpha_{k+1} \cdots \alpha_{k+n})^{2} |\ip{x}{f_k}|^2 < \infty . \nonumber
\end{eqnarray}
By (4.1), we have
\begin{eqnarray}
\sum_{k=0}^\infty \alpha_{k+1}^{2n} |\ip{x}{f_{k+1}}|^2
&\leq& \sum_{k=0}^\infty (\alpha_{k+1} \cdots \alpha_{k+n})^2 |\ip{x}{f_{k+1}}|^2 \nonumber \\
&\leq& \sum_{k=0}^\infty (\alpha_{k}+(n-1)r)^{2n} |\ip{x}{f_k}|^2 , \nonumber
\end{eqnarray}
and
\begin{eqnarray}
\sum_{k=0}^\infty \alpha_k^{2n} |\ip{x}{f_k}|^2
&\leq& \sum_{k=0}^\infty (\alpha_{k+1} \cdots \alpha_{k+n})^2 |\ip{x}{f_k}|^2 \nonumber \\
&\leq& \sum_{k=0}^\infty (\alpha_k + nr)^{2n}|\ip{x}{f_k}|^2. \nonumber
\end{eqnarray}
Hence it follows that $ x\in D((H_{\bm{f}}^{\bm{\alpha}})^n)\; {\rm iff}\; x\in D((A_{\bm{f}}^{\bm{\alpha}})^n) \; {\rm iff} \; x\in D((B_{\bm{f}}^{\bm{\alpha}})^n)$, which implies that $D^\infty (H_{\bm{f}}^{\bm{\alpha}})=D^\infty (A_{\bm{f}}^{\bm{\alpha}}) =D^\infty (B_{\bm{f}}^{\bm{\alpha}})$. Furthermore, it is clear that $H_\varphi^0$, $A_\varphi^0$, $B_\varphi^0 \in \Lc (T_\varphi^0 D^\infty (H_{\bm{f}}^{\bm{\alpha}}))$. This completes the proof.
\par
(2) This is shown similarly to (1).

\end{proof}

\section*{Conclusions}

This paper continues our (joint, and separate) analysis of biorthogonal sets of vectors of different nature, and their interest in quantum mechanics. In particular, we have shown that the extension of the notion of $\D$-quasi basis can be technically useful and may be of some interest in applications. However, more should be done, mainly on this aspect, and we plan to focus more on physics in a future paper.

\section*{Acknowledgements}
This work was partially supported by the University of Palermo, by the Gruppo Nazionale per la Fisica Matematica (GNFM) and by the Gruppo Nazionale per l'Analisi Matematica, la
Probabilit\`{a} e le loro Applicazioni (GNAMPA) of the Istituto
Nazionale di Alta Matematica (INdAM).

\end{document}